\newtheorem{thm}{Theorem}
\newtheorem{cor}{Corollary}
 \DeclareMathOperator{\var}{var}
\DeclareMathOperator{\tr}{tr} 
\DeclareMathOperator{\IMSE}{IMSE}
\newcommand{\E}{\mathbb{E}}
\DeclareMathOperator{\diag}{diag}
\DeclareMathOperator*{\argmin}{argmin}
\DeclareMathOperator{\eff}{eff}
\DeclareMathOperator{\opt}{opt}
\newcommand{\reals}{\mathbb{R}}
\newcommand{\naturals}{\mathbb{R}}
\newcommand{\T}{\intercal}
\newcommand{\vx}{\boldsymbol{x}}
\DeclareMathOperator{\EI}{EI}
\newcommand{\vt}{\boldsymbol{t}}
\newcommand{\vz}{\boldsymbol{z}}
\newcommand{\vg}{\boldsymbol{g}}
\newcommand{\unif}{\textup{unif}}
\newcommand{\asin}{\textup{asin}}
\newcommand{\vbeta}{\boldsymbol{\beta}}
\newcommand{\mA}{{\mathsf A}}
\newcommand{\mL}{{\mathsf L}}
\newcommand{\mI}{{\mathsf I}}
\newcommand{\valpha}{\boldsymbol{\alpha}}
\newcommand{\dif}{{\rm d}}
\newcommand{\ch}{\mathcal{H}}
\newcommand{\Linear}{{\text L}}
\newcommand{\tar}{\text{\rm tar}}
\def\abs#1{\ensuremath{\left \lvert #1 \right \rvert}}
\begin{document}

\title{On Efficient Design of Pilot Experiment for Generalized Linear Models}

\author[1]{Yiou Li}
\author[2]{Xinwei Deng\thanks{Address for correspondence: Xinwei Deng, Associate Professor, Department of Statistics, Virginia Tech, Blacksburg, VA, 24061 (xdeng@vt.edu).}}
\affil[1]{Department of Mathematical Sciences, DePaul University}
\affil[2]{Department of Statistics, Virginia Tech}

\date{}
\maketitle

\begin{abstract}
The experimental design for a generalized linear model (GLM) is important but challenging since the design criterion often depends on model specification including the link function, the linear predictor, and the unknown regression coefficients.
Prior to constructing locally or globally optimal designs, a pilot experiment is usually conducted to provide some insights on the model specifications.
In pilot experiments, little information on the model specification of GLM is available.
Surprisingly, there is very limited research on the design of pilot experiments for GLMs.
In this work, we obtain some theoretical understanding of the design efficiency in pilot experiments for GLMs.
Guided by the theory, we propose to adopt a low-discrepancy design with respect to some target distribution for pilot experiments.
The performance of the proposed design is assessed through several numerical examples.

\noindent{\bf Keywords}: Design Efficiency, Discrepancy, Model Uncertainty, Optimal Design.
\end{abstract}

\section{Introduction}\label{sec:intro}
Various experimental design problems encounter the non-normal response such as the binary outcome and the number of events \citep{wu2011experiments}.
While the generalized linear models (GLMs) \citep{72nelder} are commonly used to analyze the data with non-normal responses, the experimental design issues for GLMs are challenging since the design criterion often relies on model specification including the link function, the linear predictor, and the unknown regression coefficients.
For a generalized linear model, let us assume that the $d$-dimensional design variable $\vx = [x_1,\ldots,x_d]^{\top}$ is drawn from some experimental region $\Omega$.
The experimental region $\Omega$ could be bounded or unbounded, such as $[-1,1]^d$ or $\mathbb{R}^d$ .
The response variable $Y(\vx)$ of a GLM is considered to follow a distribution in the exponential family.
The mean response $\mu(\vx)$ is related to the design variable $\vx$ through a link function $h$,
$$\mu(\vx) = \E[Y(\vx)] = h^{-1}(\eta(\vx)),$$
where $\vg = [g_1,\ldots,g_l]^{\top}$ is the basis function, $\vbeta = [\beta_1,\ldots,\beta_l]^{\top}$ is the regression coefficient,  $\eta(\vx) = \vbeta^{\top}\vg(\vx)$ is the linear predictor, and $h^{-1}$ is the inverse function of $h$.
Allowing repeated measurements, consider an experiment with $n$ observations at $m$ distinct design points, and the corresponding exact design could be expressed as:
\begin{equation}\label{eqn:design}
\xi =  \left\{\begin{array}{ccc}
\vx_1, & \ldots, & \vx_m\\
n_1, & \ldots, & n_m
\end{array}\right\},
\end{equation}
where $n_i$ is the number of repetitions at design point $\vx_i$, and $\sum\limits_{i=1}^m n_i=n$, the size of the design.  Denote the empirical distribution of design $\xi$ as $F_{\xi}$. For a model specification $M = (h,\vg,\vbeta)$, the information matrix of an exact design $\xi$ is
\begin{equation}\label{eqn:fisher}
\mI(\xi;M)=\sum\limits_{i=1}^m \frac{n_i}{n}\vg(\vx_i)w(\vx_i;M)\vg^\top(\vx_i),
\end{equation}
where $w(\vx_i;M) = \left[\var(Y(\vx_i))[h^{'}(\mu(\vx_i))]^2\right]^{-1}$.
Clearly, the design issue for GLMs is complicated and challenging due to the dependence of the information matrix $\mI(\xi;M)$ on all elements of the model specification $M$.

The experimental designs for GLMs have been extensively studied, under the assumption that the model space $\mathcal{M}$ containing all model specifications of interest is available from a pilot experiment.
In such pilot experiments, there is little information about the model specification.
Actually, the main purpose of a pilot experiment is to obtain valuable information on the choice of appropriate link and basis functions, and consequently, obtain some initial estimate of the regression coefficients.
A general and flexible design criterion to assess the accuracy of coefficient estimates is L-optimality,
which aims at minimizing $\L_{\opt}(\xi;M) = \tr\left[\mI^{-1}(\xi;M)\mL\right]$
with $\mL$ to be an $l\times q$ matrix.
When the rank of $\mL$ is 1, L-optimality becomes c-optimality that minimizes a linear combination of the variances of the coefficient estimates.
When $\mL$ is chosen to be the identity matrix, L-optimality becomes the classical A-optimality that minimizes the total variance of the coefficient estimates.
The \emph{`standardized'} A-optimality $SA_{\text{opt}}(\xi;M) =\sum_{j=1}^l \frac{\left(\mI^{-1}(\xi;M)\right)_{jj}}{\left(\mI^{-1}(\xi_j^*;M)\right)_{jj}} = \tr\left[\mI^{-1}(\xi;M)\mL\right]$ proposed by \cite{dette1997designing} is also a special case of L-optimality, where $\xi_j^* = \argmin_{\xi} (\mI^{-1}(\xi;M))_{jj}$ is the design that minimizes the asymptotic variance of the maximum likelihood estimator $\hat{\beta}_j$ for the $j$th coefficient $\beta_j$, and $ \mL = \diag\left[(1/\left(\mI^{-1}(\xi_j^*;M)\right)_{jj})_{j=1}^l\right]$.
Compared to the  classical A-optimality \citep{72fed, atkinson2007optimum},  the `standardized' A-optimality takes into consideration that the variances of coefficient estimators could be of different scales.
A scale-free measure to assess the performance of a design under L-optimality is L-efficiency,
\begin{equation}
\eff_{\Linear}(\xi,\xi^{\opt}_M;M) = \frac{\Linear_{\text{opt}}(\xi^{\opt}_M;M)}{\Linear_{\text{opt}}(\xi;M)},
\end{equation}
where $\xi^{\opt}_M = \argmin_{\xi} \Linear_{\text{opt}}(\xi;M)$ is the locally L-optimal design for model specification $M = (h,\vg,\vbeta)$.  Obviously, $0\leq \eff_{\Linear}(\xi,\xi^{\opt}_M;M)\leq 1$ for any design $\xi$, and the larger the L-efficiency $\eff_{\Linear}(\xi,\xi^{\opt}_M;M)$, the more efficient the design $\xi$ is.

%
After obtaining some preliminary understanding of the model specifications from the pilot experiment,  various locally or globally optimal designs can be constructed, including \citep{imhof2000graphical, amzal2006bayesian, tekle2008maximin,  woods2011continuous, 13yang,  dean2015handbook, woods2017bayesian, li2020efficient}, among many others.
While little knowledge of the model specification is available in a pilot experiment,
it calls for a flexible and efficient design, which can regulate the L-efficiency for all model specifications $M$ in a model space $\mathcal{M}$ containing a wide variety of model specifications.

In practice, fractional factorial designs and space-filling designs are the common choices for the pilot experiments to obtain some initial understanding of the GLMs. There are two main drawbacks of  the fractional factorial designs. One is the fixed design size, and as a result, the design size grows exponentially as the design variable dimension $d$ gets large. Secondly, the number of levels depends on the basis functions. For instance, a two-level fractional factorial design can not be applied when the basis functions contain quadratic terms.
To our best knowledge, the literature on the design of pilot experiments is surprisingly scarce.
In this work, we establish a tight lower bound of  L-efficiency to investigate the designs of pilot experiments for GLMs.
This lower bound provides a theoretical rationale for seeking efficient and robust designs for pilot experiments of GLMs.
Guided by the theoretical result, we propose to use the discrepancy with respect to some target distribution as the design criterion, which is robust against the unknown model specification, to regulate the L-efficiency of the design for pilot experiments.
The proposed design criterion and corresponding design require very mild assumptions on the model space $\mathcal{M}$ and hence is suitable for pilot experiments of GLMs.

The rest of the work is organized as follows.
In Section \ref{sec:lowdiscrepancydesign},  the design criterion, discrepancy, that measures the difference between the empirical distribution of a design and a target distribution $F_{\tar}$,  is introduced. By deriving a tight lower bound of L-efficiency of a design for all model specifications in a model space that requires little model assumptions, the theoretical rationale of the proposed design criterion is justified.
The discrepancies of commonly used designs in the literature  and their L-efficiency performance are assessed through numerical examples in Section \ref{sec:examples}.
We conclude this work with some discussions in Section \ref{sec:conclusion}.


\section{Low Discrepancy Design and Its L-Efficiency}\label{sec:lowdiscrepancydesign}
Denote the model space,  i.e., the set of all possible model specifications,  to be $\mathcal{M}$.
Apparently,  before conducting the pilot experiments, there is little information about the model space $\mathcal{M}$, and a key objective of the pilot experiments is to obtain some valuable information about model space $\mathcal{M}$ so that a locally/globally optimal design  can be further constructed.
Therefore, in the pilot experiment, the experimenter would prefer a design that is robust and efficient, in other words, guarantees a reasonably large L-efficiency, over a wide class of model specifications.

In this section,  we first introduce a design criterion, discrepancy, as a measure of the difference between the empirical distribution of a design $\xi$ and some continuous target distribution $F_{\tar}$. We then describe the reproducing kernel Hilbert space (RKHS) that defines the model space $\mathcal{M}$, which accounts for a high level of model uncertainty with an appropriate choice of reproducing kernel. By deriving a tight lower bound of L-efficiency for all model specifications in the model space $\mathcal{M}$ defined above, we show that a design with a small discrepancy would be an appropriate design that regulates the L-efficiency over a large variety of model specifications for a pilot experiment.

\subsection{Discrepancy Measures the Difference between $\xi$ and $F_{\tar}$}
Consider some target distribution with cumulative distribution function $F_{\tar}$,  which lies in a space $\mathcal{F}$ of signed measures defined by some reproducing kernel $K$.  A reproducing kernel $K$ is a symmetric and semi-positive definite function that satisfies:
\begin{subequations}\label{eqn:kernelconditions}
\begin{gather}
K(\vx,\vt)=K(\vt,\vx)\,\,\,\forall \vt, \vx \in\Omega, \\
\sum_{i,k}c_i c_j K(\vx_i,\vx_j)\geq 0\qquad \forall n \in \naturals, \ c_1, \ldots, c_n \in\reals, \ \vx_1, \ldots, \vx_n \in\Omega.
\end{gather}
\end{subequations}
For a kernel $K$, the measure space $\mathcal{F}$ consists of signed measures
$$\mathcal{F} = \left\{\text{signed measure } F(x): \int_{\Omega^2} K(\vx, \vt)\dif F(\vx)\dif F(\vt)<\infty\right\},$$
and is equipped with the inner product
$$\langle F, G\rangle_{\mathcal{F}} = \int_{\Omega^2} K(\vx, \vt)\dif F(\vx)\dif G(\vt).$$
Under the context of experimental design, an exact design
\begin{equation*}
\xi =  \left\{\begin{array}{ccc}
\vx_1, & \ldots, & \vx_m\\
n_1, & \ldots, & n_m
\end{array}\right\}
\end{equation*}
induces a corresponding empirical distribution $F_\xi$ to approximate the target distribution $F_{\tar}$ on $\Omega$. Then, the distance between $F_{\xi}$ and the target distribution $F_{\tar}$ is called the \emph{discrepancy} between $F_{\xi}$ and $F_{\tar}$ \citep{hickernell1999goodness}, i.e.,
\begin{align}
\nonumber
D(\xi;F_{\tar}) &= \|F_{\xi}-F_{\tar}\|_{\mathcal{F}}\nonumber \\
&=\left[\int_{\Omega^2}K(\vx,\vt) \, \dif \{F_{\tar}(\vx)-F_\xi(\vx)\} \dif \{F_{\tar}(\vt)-F_\xi(\vt)\}\right]^{\frac{1}{2}}\nonumber\\
&=
\left[\int_{\Omega^2}K(\vx,\vt) \, \dif F_{\tar}(\vx) \dif F_{\tar}(\vt)-\frac{2}n \sum_{i=1}^m n_i\int_{\Omega}K(\vx_i,\vt) \, \dif F_{\tar}(\vt)\right.\nonumber\\
&\qquad \qquad
\left.+ \frac{1}{n^2}\sum\limits_{i,k=1}^m n_in_kK(\vx_i,\vx_k)\right]^{\frac{1}{2}}.
\label{discdef}
\end{align}
Although $D(\xi;F_{\tar})$ depends on the kernel $K$,
this dependence is suppressed for simplicity of notation.

\subsection{Discrepancy Design Criterion to Regulate L-Efficiency}\label{sec:boundleff}
With any reproducing kernel $K$, we can also define a reproducing kernel Hilbert space (RKHS) $\ch$, which is a separable Hilbert space \citep{Aro50, BerT-A04, Fas07a}.   $K(\cdot,\vx)$ is the representer for the functional
that evaluates a function at a point $\vx$, that is,
$$
K(\cdot, \vx)\in\mathcal{H},\quad f(\vx)=\langle f, K(\cdot,
\vx)\rangle_{\mathcal{H}} \qquad \forall
\vx\in\Omega,f\in\mathcal{H}.$$
Moreover, any function satisfying two conditions in \eqref{eqn:kernelconditions} is the
reproducing kernel for some unique Hilbert space. In numerical analysis, the reproducing kernel Hilbert spaces are commonly used to provide a tight upper bound on numerical integration error.
It is known that \citep{hickernell1999goodness}, for any $f\in\mathcal{H}$,
\begin{align}
\nonumber
\abs{\int_{\Omega}f(\vx)dF_{\tar}(\vx)-\sum_{i=1}^m \frac{n_i}{n}f(\vx_i)} & =\abs{\int_{\Omega}f(\vx) \, \dif [F_{\tar}(\vx)-F_\xi(\vx)]} \\
\label{eqn:KH}
& \le D(\xi;F_{\tar}) V(f),
\end{align}
where $D(\xi;F_{\tar})$ is the \emph{discrepancy} defined in \eqref{discdef} to measure the difference between $F_{\xi}$ and $F_{\tar}$ by the kernel $K$. Moreover, the \emph{variation} of the integrand, $V(f)$ in \eqref{eqn:KH}, which measures the roughness/degrees of oscillation of $f$, is the (semi-) norm of its non-constant part:
\begin{equation} \label{vardef}
V(f)=\begin{cases} \|f\|_{\mathcal{H}}&\text{if} \,\,1\notin\mathcal{H},\\
\left(\|f\|^2_{\mathcal{H}}-\langle
f,1\rangle^2_{\mathcal{H}}/\|1\|^2_{\mathcal{H}}\right)^{1/2}&\text{if}\,\,
1\in\mathcal{H}.\end{cases}
\end{equation}

Using the discrepancy bound \eqref{eqn:KH}, in what follows we will make a connection between the discrepancy and L-efficiency of a design. Consequently,  we will show that a design with a small discrepancy can regulate the L-efficiency for all model specifications in a model space $\mathcal{M}$ defined by the RKHS. With an appropriate choice of the reproducing kernel $K$, the corresponding model space $\mathcal{M}$ requires very mild model assumptions and contains a wide variety of model specifications.
Before stating the theorem, we need to define a few notations.
For any model specification $M = (h,\vg,\vbeta)$, define the Fisher information matrix for a design of  a single point $\vx$ with unit mass as $\mI_{\vx} = \vg(\vx)w(\vx;M)\vg^\top(\vx)$.
Then the Fisher information matrix for a continuous design with target distribution $F_{\tar}$ on $\Omega$ is expressed as
$$\mI({\tar};M)=\int_\Omega \vg(\vx)w(\vx;M)\vg^\top(\vx)\dif F_{\tar} = \int_{\Omega} \mI_{\vx}\dif F_{\tar},$$
and the Fisher information matrix for an exact design $\xi$ in \eqref{eqn:fisher} can be expressed as
$\mI(\xi;M)=\int_\Omega \mI_{\vx}\dif F_{\xi}.$

The result in Theorem \ref{thm:robust} below provides a tight lower bound of L-efficiency of a design $\xi$ for all model specifications $M\in\mathcal{M}$.
This lower bound implies that a design with a small discrepancy can be robust and efficient to a wide variety of model specifications with an appropriate choice of kernel $K$ and target distribution $F_{\tar}$.
\begin{thm}\label{thm:robust}
Suppose that $\ch$ is a reproducing kernel Hilbert space of functions defined on $\Omega$ with kernel $K$.
Assume that the function $u_{\valpha;M}: \vx\mapsto \valpha^\T[\mI({\tar};M)]^{-\frac{1}{2}}\mI_{\vx}[\mI({\tar};M)]^{-\frac{1}{2}}\valpha$ lies in $\ch$ for any $\valpha\in \reals^l$ and any $M\in\mathcal{M}$.
Define the variation over a model specification $M$ as
$$V_{M}=\sup_{\|\valpha\|_2\leq 1} V(u_{\valpha;M}),$$
where variation $V(\cdot)$ is defined in \eqref{vardef}.
Then it follows that for a design $\xi$, the L-efficiency of $\xi$ for any model specification $M\in\mathcal{M}$ is tightly bounded below by
\begin{eqnarray}\label{lowerbound}
\eff_{\Linear}(\xi,\xi^{\opt};M)\geq [1-D(\xi;F_{\tar})\cdot V_{M}]\eff_{\Linear}(\tar,\xi^{\opt};M),\,\,\,\,\forall M\in\mathcal{M},
\end{eqnarray}
where $\eff_{\Linear}(\tar,\xi^{\opt};M) = \frac{\Linear_{\opt}(\xi^{\opt};M)}{\Linear_{\opt}(\tar;M)}$.
\end{thm}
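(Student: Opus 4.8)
The plan is to convert the scalar Koksma--Hlawka-type integration bound \eqref{eqn:KH} into a Loewner-order inequality between the two information matrices $\mI(\xi;M)$ and $\mI(\tar;M)$, and then push that inequality through matrix inversion and the trace functional defining $\Linear_{\opt}$. Throughout I fix an arbitrary $M\in\mathcal{M}$ and assume the regularity that $\mI(\tar;M)$ and $\mI(\xi;M)$ are nonsingular, so that the relevant square roots, inverses, and L-optimality values are well defined.

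First I would apply \eqref{eqn:KH} to the integrand $u_{\valpha;M}$. Writing $A=[\mI(\tar;M)]^{-1/2}$ and using $\mI(\tar;M)=\int_\Omega \mI_{\vx}\dif F_{\tar}$, $\mI(\xi;M)=\int_\Omega \mI_{\vx}\dif F_{\xi}$ together with linearity of the integral, the two quantities in \eqref{eqn:KH} become $\int_\Omega u_{\valpha;M}\dif F_{\tar}=\valpha^\T A\,\mI(\tar;M)\,A\valpha=\|\valpha\|_2^2$ (since $A\,\mI(\tar;M)\,A=\mathsf{Id}$, the $l\times l$ identity) and $\sum_{i=1}^m (n_i/n)\,u_{\valpha;M}(\vx_i)=\valpha^\T A\,\mI(\xi;M)\,A\valpha$. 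Substituting into \eqref{eqn:KH} and bounding $V(u_{\valpha;M})\le V_M$ for every $\valpha$ with $\|\valpha\|_2\le 1$ yields $\valpha^\T A\,\mI(\xi;M)\,A\valpha\ge 1-D(\xi;F_{\tar})V_M$ for all unit vectors $\valpha$. By homogeneity this holds for all $\valpha$ after scaling, so it is equivalent to the matrix inequality $A\,\mI(\xi;M)\,A\succeq[1-D(\xi;F_{\tar})V_M]\,\mathsf{Id}$, and congruence by the symmetric factor $A^{-1}=[\mI(\tar;M)]^{1/2}$ gives $\mI(\xi;M)\succeq[1-D(\xi;F_{\tar})V_M]\,\mI(\tar;M)$.

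Next I would invoke the operator antimonotonicity of matrix inversion on the positive-definite cone. If $1-D(\xi;F_{\tar})V_M\le 0$ the asserted bound is vacuous, since L-efficiency is nonnegative; so assume $1-D(\xi;F_{\tar})V_M>0$. Then the previous display gives $[\mI(\xi;M)]^{-1}\preceq[1-D(\xi;F_{\tar})V_M]^{-1}[\mI(\tar;M)]^{-1}$. Pairing this with the positive semidefinite weight $\mL$ (which is PSD in every instance of interest: A-, c-, and standardized A-optimality) and using that $\tr[B\mL]\ge 0$ whenever $B\succeq 0$, I obtain $\Linear_{\opt}(\xi;M)=\tr[\mI^{-1}(\xi;M)\mL]\le[1-D(\xi;F_{\tar})V_M]^{-1}\,\Linear_{\opt}(\tar;M)$, equivalently $\Linear_{\opt}(\tar;M)\ge[1-D(\xi;F_{\tar})V_M]\,\Linear_{\opt}(\xi;M)$. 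Since $\Linear_{\opt}(\xi^{\opt};M)>0$ is common to the numerators of both $\eff_{\Linear}(\xi,\xi^{\opt};M)$ and $\eff_{\Linear}(\tar,\xi^{\opt};M)$, cancelling it shows that \eqref{lowerbound} is equivalent to exactly this last inequality, completing the derivation of the bound.

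The main obstacle is less the chain of inequalities above than the justification of the word \emph{tight}. Every manipulation from the matrix inequality onward preserves equality, so all the slack sits in the single application of \eqref{eqn:KH}; the lower bound is thus attained precisely when that Cauchy--Schwarz-type inequality is attained. To make tightness rigorous I would take a worst-case direction $\valpha^*$ realizing $V_M=\sup_{\|\valpha\|_2\le 1}V(u_{\valpha;M})$ and check whether the corresponding $u_{\valpha^*;M}$ can be aligned with (i.e.\ made proportional to) the representer of the discrepancy functional $f\mapsto\int_\Omega f\,\dif(F_{\tar}-F_\xi)$, for which \eqref{eqn:KH} holds with equality. Establishing that this alignment is attainable, exactly or in a limiting sense, within the family $\{u_{\valpha;M}:M\in\mathcal{M}\}$ — that is, that the model space is rich enough — is the delicate point and the part I expect to require the most care.
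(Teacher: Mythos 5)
Your proposal is correct and follows essentially the same route as the paper: both hinge on applying the integration-error bound \eqref{eqn:KH} to the quadratic forms $u_{\valpha;M}$ and taking the supremum over unit vectors $\valpha$, which yields the eigenvalue bound on $[\mI({\tar};M)]^{-1/2}\mI(\xi;M)[\mI({\tar};M)]^{-1/2}$ that the paper phrases via the spectral radius of $\tilde{\mI}$ and you phrase as the Loewner inequality $\mI(\xi;M)\succeq[1-D(\xi;F_{\tar})V_M]\,\mI({\tar};M)$, after which inversion and the trace comparison with positive semidefinite $\mL$ (plus the same vacuous-case remark when $D(\xi;F_{\tar})V_M\geq 1$) give \eqref{lowerbound}. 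Your closing observation that the ``tight'' qualifier is the delicate, unproven part applies equally to the paper, whose proof establishes only the inequality and defers tightness to an unproven remark following the theorem.
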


\begin{proof}
Define
\begin{eqnarray*}
\tilde{\mI} &=& \mathbcal{I}_{l\times l} - [\mI({\tar};M)]^{-1/2}\mI(\xi;M)[\mI({\tar};M)]^{-1/2}\\
& =& [\mI({\tar};M)]^{-1/2}\left(\mI({\tar};M)-\mI(\xi;M)\right)[\mI({\tar};M)]^{-1/2},
\end{eqnarray*}
where $\mathbcal{I}_{l\times l}$ is a $l\times l$ identity matrix. Then, the spectral radius of $\tilde{\mI}$ is
\begin{eqnarray*}
\rho(\tilde{\mI}) &=& \sup_{\|\valpha\|_2\leq 1} |\valpha^\T\tilde{\mI}\valpha|\\
&=& \sup_{\|\valpha\|_2\leq 1} \left|\valpha^\T[\mI({\tar};M)]^{-1/2}\left(\mI({\tar};M)-\mI(\xi;M)\right)[\mI({\tar};M)]^{-1/2}\valpha\right|\\
&=& \sup_{\|\valpha\|_2\leq 1} \left|\valpha^\T[\mI({\tar};M)]^{-1/2}\left(\int_{\Omega}\mI_{\vx}\dif \left[F_{\tar}-F_{\xi}\right]\right)[\mI({\tar};M)]^{-1/2}\valpha\right|\\
&=& \sup_{\|\valpha\|_2\leq 1} \left|\int_{\Omega}\valpha^\T[\mI({\tar};M)]^{-1/2}\mI_{\vx}[\mI({\tar};M)]^{-1/2}\valpha\dif \left[F_{\tar}-F_{\xi}\right]\right|\\
&\leq & D(\xi;F_{\tar})\sup_{\|\valpha\|_2\leq 1}V(\valpha^\T[\mI({\tar};M)]^{-1/2}\mI_{\vx}[\mI({\tar};M)]^{-1/2}\valpha) = D(\xi;F_{\tar}) \cdot V_{M},
\end{eqnarray*}
where the last inequality comes from \eqref{eqn:KH}. Note that $(1-\rho(\tilde{\mI}))$ is the smallest eigenvalue of $[\mI({\tar};M)]^{-1/2}\mI(\xi;M)[\mI({\tar};M)]^{-1/2}$.

When $D(\xi;F_{\tar})\cdot V_{M}<1$, the smallest eigenvalue of $[\mI({\tar};M)]^{-1/2}\mI(\xi;M)[\mI({\tar};M)]^{-1/2}$ is no smaller than $1-D(\xi;F_{\tar})\cdot V_{M}$.
Since $[\mI({\tar};M)]^{-1/2}\mI(\xi;M)[\mI({\tar};M)]^{-1/2}$ is a positive definite matrix, the largest eigenvalue of its inverse $\left\{[\mI({\tar};M)]^{-1/2}\mI(\xi;M)[\mI({\tar};M)]^{-1/2}\right\}^{-1}$ is no larger than $1/(1-D(\xi;F_{\tar})\cdot V_{M})$, provided $D(\xi;F_{\tar})\cdot V_{M}<1$.

Then, for any model specification $M\in\mathcal{M}$,  the L-efficiency of a design $\xi$, $\eff_{\Linear}(\xi,\xi^{\opt};M)$,  is tightly bounded below by
\begin{eqnarray*}\label{eqn:saeff}
\eff_{\Linear}(\xi,\xi^{\opt};M) &=&  \frac{\tr\left[\mI^{-1}(\xi^{\opt};M)\mL\right]}{\tr\left[\mI^{-1}(\xi;M)\mL\right]}\nonumber\\
&=& \frac{\tr\left[\mI^{-1}(\xi^{\opt};M)\mL\right]}{\tr\left[\big(\mI({\tar};M)^{-1}\mI(\xi;M)\big)^{-1}\mI({\tar};M)^{-1}\mL\right]}\nonumber\\
&\geq & \frac{\tr\left[\mI^{-1}(\xi^{\opt};M)\mL\right]}{\rho\left[\left(\mI({\tar};M)^{-1}\mI(\xi;M)\right)^{-1}\right]\tr\left[\mI({\tar};M)^{-1}\mL\right]}\nonumber\\
&=& \frac{\eff_\Linear(\tar,\xi^{\opt};M)}{\rho\left[\left(\mI({\tar};M)^{-1}\mI(\xi;M)\right)^{-1}\right]}\nonumber\\
&\geq & (1-D(\xi;F_{\tar})\cdot V_{M})\eff_{\Linear}(\tar,\xi^{\opt};M),
\end{eqnarray*}
provided $D(\xi;F_{\tar})\cdot V_{M}<1$. When $D(\xi;F_{\tar})\cdot V_{M}\geq 1$, $(1-D(\xi;F_{\tar})\cdot V_{M})\leq 0$, and the inequality holds naturally.

\end{proof}
The tight lower bound \eqref{lowerbound} holds for all model specifications $M\in \mathcal{M}$, where the model space $\mathcal{M}$ is defined by the RKHS under the reproducing kernel $K$.
We will discuss the choice of the kernel function $K$ in the next section.
This tight lower bound implies that the L-efficiency of a design is always regulated by a reasonably large value for all $M\in\mathcal{M}$,
provided that the target distribution $F_{\tar}$ is chosen appropriately and the discrepancy $D(\xi,F_{\tar})$ is small enough.
It is worth pointing out that
the equality in the lower bound \eqref{lowerbound} holds for some model specification $M$ in the model space $\mathcal{M}$ defined by the RKHS.

It is seen that the lower bound of the L-efficiency consists of three terms,  $\eff_{\Linear}(\tar,\xi^{\opt};M)$, $V_{M}$, and $D(\xi;F_{\tar})$.
Both $\eff_{\Linear}(\tar,\xi^{\opt};M)$ and $V_M$ depend on link function $h$, basis functions $\vg$, regression coefficients $\vbeta$,  and the choice of target distribution $F_{\tar}$, but do not depend on the design of the pilot experiments.
The $\eff_{\Linear}(\tar,\xi^{\opt};M)$ is the L-efficiency of the continuous design with target distribution $F_{\tar}$, which depends on the choice of target distribution $F_{\tar}$ and model specification $M$.
The $V_M$,  as defined in Theorem \ref{thm:robust}, measures the roughness/degrees of oscillation of $\vg$ and $w$ of the model specification $M\in\mathcal{M}$.
Note that $V_{M}=\sup_{\|\valpha\|_2\leq 1} V(u_{\valpha;M})$ is the maximum semi-norm of $u_{\valpha;M}$, where  $u_{\valpha;M}: \vx\mapsto \valpha^\T[\mI({\tar};M)]^{-\frac{1}{2}}\mI_{\vx}[\mI({\tar};M)]^{-\frac{1}{2}}\valpha$ with $\mI_{\vx} = \vg(\vx)w(\vx;M)\vg^\top(\vx)$ and $\|\valpha\|_2\leq 1$.
Thus, $u_{\valpha;M}$
is a linear combination of $g_i(\vx)g_j(\vx)w(\vx;M)$, $i,j=1,\ldots,l$, and consequently $V_M$ can measure the roughness/degrees of oscillation of $g_i(\vx)g_j(\vx)w(\vx;M)$, $i,j=1,\ldots,l$.

The discrepancy $D(\xi;F_{\tar})$ depends on the design, the choice of target distribution $F_{\tar}$ and kernel $K$, but not on the model specification $M = (h,\vg,\vbeta)$.
It is to measure how well the design $\xi$ approximates the target distribution $F_{\tar}$.
Note that, only when $D(\xi;F_{\tar})\cdot V_{M}<1$, the lower bound $(1-D(\xi;F_{\tar})\cdot V_{M})\eff_{\Linear}(\tar,\xi^{\opt};M)$ of the L-efficiency makes sense in a practical perspective. That is, the L-efficiency is regulated by a reasonably large lower bound.
This condition implies that a design with a smaller discrepancy $D(\xi;F_{\tar})$ is required to regulate the L-efficiency if $\vg$ and $w$ of some model specifications $M\in\mathcal{M}$ are believed to be more oscillating.

Theorem \ref{thm:robust} provides a theoretical rationale to adopt a design that has a small discrepancy with respect to an appropriately chosen target distribution when little knowledge of the model specification is available to the experimenter.
The target distribution $F_{\tar}$ should be chosen so that the continuous design following $F_{\tar}$ obtains a reasonably large L-efficiency $\eff_{\Linear}(\tar,\xi^{\opt};M)$ for a variety of possible and relevant model specifications $M=(h,\vg,\vbeta)$. 
The choice of target distribution $F_{\tar}$ for the pilot experiments of GLMs will be discussed in the next section.

In practice, the experimenter could also be interested in regulating the prediction error of the response. 
Under this consideration, a prediction-oriented optimality criterion for GLMs, such as the EI-optimality in \citep{li2020efficient}, can be adopted to investigate designs for the pilot experiments.
The EI-optimality, as a flexible generalization of classical I-optimality, aims at minimizing the integrated mean squared prediction error of the response with respect to some measure $F_{\IMSE}$.
Specifically, the EI-optimality in \citep{li2020efficient} is expressed as
$$\EI(\xi;M,F_{\IMSE}) = \tr\left(\mA_{(M,F_{\IMSE})} \mI(\xi;M)^{-1}\right),$$
with matrix $\mA_{(M,F_{\IMSE})} =\int_{\Omega}\vg(\vx)\vg^T(\vx)\left[\frac{\dif h^{-1}}{\dif \eta}\right]^2\dif F_{\IMSE}(\vx)$ depending only on the model specification $M$ and $F_{\IMSE}$, but not the design $\xi$.
The following result shows that the low discrepancy design can also regulate the EI-optimality over a variety of model specifications.
It implies that a design with a small discrepancy is also desirable when the objective of the pilot experiments is to control the prediction error.

\begin{cor}\label{cor:robust}
In the same conditions of Theorem \ref{thm:robust}, for a design $\xi$, the EI-efficiency for any model specification $M\in\mathcal{M}$ is tightly bounded below by
$$\eff_{\EI}(\xi,\xi^{\opt};M, F_{\IMSE})\geq (1-D(\xi;F_{\tar})\cdot V_{M})\eff_{\EI}(\tar,\xi^{\opt};M, F_{\IMSE}),$$
where $\eff_{\EI}(\xi,\xi^{\opt};M, F_{\IMSE}) = \frac{\EI(\xi^{\opt};M,F_{\IMSE})}{\EI(\xi;M,F_{\IMSE})}$ and $\eff_{\EI}(\tar,\xi^{\opt};M, F_{\IMSE}) = \frac{\EI(\xi^{\opt};M,F_{\IMSE})}{\EI(\tar;M,F_{\IMSE})}$ are the EI-efficiency of design $\xi$ and that of a continuous design following target distribution $F_{tar}$ relative to locally optimal design $\xi^{\opt}$ of model specification $M$, respectively.
\end{cor}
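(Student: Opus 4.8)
The plan is to observe that the EI-optimality criterion is, up to renaming, an instance of the L-optimality criterion, so that the entire argument of Theorem~\ref{thm:robust} transfers with the constant matrix $\mL$ replaced by $\mA_{(M,F_{\IMSE})}$. By the cyclic property of the trace, $\EI(\xi;M,F_{\IMSE}) = \tr\left(\mA_{(M,F_{\IMSE})}\mI(\xi;M)^{-1}\right) = \tr\left(\mI^{-1}(\xi;M)\mA_{(M,F_{\IMSE})}\right)$, which is exactly $\Linear_{\opt}(\xi;M)$ with $\mL = \mA_{(M,F_{\IMSE})}$. The decisive point is that the derivation of the spectral bound
$$\rho\left[\left(\mI(\tar;M)^{-1}\mI(\xi;M)\right)^{-1}\right]\le \frac{1}{1-D(\xi;F_{\tar})\cdot V_M}$$
in the proof of Theorem~\ref{thm:robust} uses only the information matrices $\mI(\xi;M)$ and $\mI(\tar;M)$ together with the discrepancy inequality \eqref{eqn:KH}; it never refers to the criterion matrix. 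I would therefore simply quote this bound, valid whenever $D(\xi;F_{\tar})\cdot V_M<1$.

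The one ingredient that does depend on the criterion matrix is its positive semidefiniteness, which is what justifies the trace-monotonicity step used at the end of the proof of Theorem~\ref{thm:robust}. I would verify it directly from the integral representation
$$\mA_{(M,F_{\IMSE})} = \int_{\Omega}\vg(\vx)\vg^\T(\vx)\left[\frac{\dif h^{-1}}{\dif \eta}\right]^2\dif F_{\IMSE}(\vx),$$
where each integrand is the rank-one positive semidefinite matrix $\vg(\vx)\vg^\T(\vx)$ scaled by the nonnegative factor $[\dif h^{-1}/\dif\eta]^2$ and integrated against the nonnegative measure $F_{\IMSE}$; hence $\mA_{(M,F_{\IMSE})}$ is positive semidefinite, exactly the property that the weight matrix $\mL$ is assumed to possess in L-optimality. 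This supplies the inequality $\tr\left[\mI^{-1}(\xi;M)\mA_{(M,F_{\IMSE})}\right]\le \rho\left[\left(\mI(\tar;M)^{-1}\mI(\xi;M)\right)^{-1}\right]\tr\left[\mI^{-1}(\tar;M)\mA_{(M,F_{\IMSE})}\right]$, that is, $\EI(\xi;M,F_{\IMSE})\le \rho\left[\left(\mI(\tar;M)^{-1}\mI(\xi;M)\right)^{-1}\right]\EI(\tar;M,F_{\IMSE})$. This semidefiniteness check is the only place requiring any care, and it is the mild obstacle of the argument.

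With these two pieces, the bound assembles just as before. Since the optimal value $\EI(\xi^{\opt};M,F_{\IMSE})$ is common to both efficiencies and cancels, I would chain, for $D(\xi;F_{\tar})\cdot V_M<1$,
$$\eff_{\EI}(\xi,\xi^{\opt};M,F_{\IMSE}) = \frac{\EI(\xi^{\opt};M,F_{\IMSE})}{\EI(\xi;M,F_{\IMSE})} \ge \frac{\EI(\xi^{\opt};M,F_{\IMSE})}{\rho\left[\left(\mI(\tar;M)^{-1}\mI(\xi;M)\right)^{-1}\right]\EI(\tar;M,F_{\IMSE})} = \frac{\eff_{\EI}(\tar,\xi^{\opt};M,F_{\IMSE})}{\rho\left[\left(\mI(\tar;M)^{-1}\mI(\xi;M)\right)^{-1}\right]},$$
and then insert the spectral bound to produce the factor $\left(1-D(\xi;F_{\tar})\cdot V_M\right)$, which is the claimed inequality. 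In the complementary regime $D(\xi;F_{\tar})\cdot V_M\ge 1$ the bracketed factor is nonpositive while the EI-efficiency is nonnegative, so the inequality holds automatically; tightness is inherited from Theorem~\ref{thm:robust} at the same model specification for which its bound is attained.
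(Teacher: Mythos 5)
Your proposal is correct and follows exactly the paper's own route: the paper's proof of this corollary is literally the observation that the argument of Theorem~\ref{thm:robust} applies verbatim with $\mL = \mA_{(M,F_{\IMSE})}$. Your additional check that $\mA_{(M,F_{\IMSE})}$ is positive semidefinite (via its integral representation) is a detail the paper leaves implicit, and including it only strengthens the writeup.
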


\begin{proof}
The proof is the same as the proof of Theorem \ref{thm:robust}, where $\mL = \mA_{(M,F_{\IMSE})}$ in this corollary.
\end{proof}

With a specific target distribution $F_\tar$ and reproducing kernel $K$,
the design $\xi$ is called \emph{a low discrepancy design} if its corresponding discrepancy $D(\xi; F_{\tar})$ is small.
Low discrepancy designs are prevalent in the Monte Carlo community to estimate high-dimensional integration \citep{novak2001integration}.
For the uniform target distribution $F_{\tar} = F_{\unif}$ on $[0,1]^d$,
Sobol sequences are usually preferred since they can be constructed easily and asymptotically achieve a small discrepancy $D(\xi;F_{\unif})$ under popular choices of kernel function $K$.
Regarding the construction of non-uniform low-discrepancy design, i.e.  non-uniform $F_{\tar}$,
the common practice is to use the inverse transformation of a uniform low-discrepancy design on $[0,1]^d$.
However, \cite{li2020transformed} showed that considering the uniform and non-uniform discrepancies defined by the same reproducing kernel $K$,
the inverse transformed uniform low discrepancy design may not preserve a small discrepancy for the non-uniform target distribution.
Other than the existing low discrepancy designs such as Sobol sequences and their inverse transformations, one could also construct a design that minimizes $D(\xi;F_{\tar})$ using optimization methods.
\cite{winker1997application} and \cite{fang2000uniform} proposed a threshold acceptance algorithm, and \cite{li2020transformed} developed a coordinate-exchange algorithm to construct such designs, but that is beyond the scope of this paper.

\subsection{Choice of Target Distribution and Reproducing Kernel}\label{sec:comparediscrepancy}
In the lower bound \eqref{lowerbound},  the L-efficiency is regulated for all model specifications in the model space $\mathcal{M}$, which is determined by the reproducing kernel $K$.  In a pilot experiment, one should choose the reproducing kernel $K$ whose corresponding RKHS contains most of the commonly used model specifications for GLMs. Furthermore,  to achieve a large lower bound in \eqref{lowerbound}, one should choose a target distribution $F_{\tar}$ that achieves a reasonably large L-efficiency $\eff_{\Linear}(\tar,\xi^{\opt};M)$ for the model specifications under consideration.  After deciding the reproducing kernel $K$ and target distribution $F_{\tar}$,  a design with a small discrepancy $D(\xi;F_{\tar})$ should be adopted in the pilot experiments for GLMs.  Note that the discrepancy in \eqref{discdef} depends on both $K$ and $F_{\tar}$. In this section, we would discuss the choice of target distribution $F_{\tar}$ and the reproducing kernel $K$.

The target distribution $F_{\tar}$ should be chosen as the one with a reasonably large L-efficiency $\eff_{\Linear}(\xi^{\tar},\xi^{\opt};M)$  for a variety of possible model specifications $M=(h,\vg,\vbeta)$ that one believes to be most relevant. Without loss of generality, the experimental region is assumed to be $\Omega=[-1,1]^d$, and two target distributions are considered in this work.
One is the uniform distribution $F_{\unif}$. For GLMs, the basis function $\vg$ may include interactions or higher-order polynomials, and the corresponding locally optimal design points are usually quite evenly located in the experimental region.
What's more, the uniform low discrepancy designs, such as Sobol sequence, are already available and ready to use \citep{niederreiter1988low, owen2000monte}.
The other target distribution considered is the arcsine distribution $F_{\asin}$ with density function $f_{\asin}(\vx) = \prod_{i=1}^d\frac{1}{\pi\sqrt{1-x_d^2}},  \vx\in[-1,1]^d$.  Unlike uniform distribution, the arcsine distribution tends to push the points towards the edges of the experimental region. For univariate linear regression models, it has been shown that arcsine support designs achieve large A-efficiency for polynomial basis functions \citep{93puk}.  We will compare the performance of the uniform low-discrepancy  and arcsine low-discrepancy designs in Section \ref{sec:examples}.

Regarding the choice of reproducing kernel $K$, we consider a popular reproducing kernel \citep{hickernell1998generalized} for experimental region $[-1,1]^d$
\begin{equation}\label{eqn:kernelchoice}
K(\vx,\vz) = \prod\limits_{j=1}^d \left[1+\frac{1}{2}(|x_j|+|z_j|-|x_j-z_j|)\right].
\end{equation}
The corresponding RKHS induced by this kernel contains all functions whose mixed partial derivatives up to order one in each coordinate are square-integrable.
Such an RKHS will include most of the commonly used model specifications in practice, such as logit, probit link functions with main-effect, interactions, and higher-order polynomial basis functions and arbitrary finite regression coefficient values.
Another advantage of this kernel is that the corresponding discrepancy is invariant under reflections of the design about any plane $x_j=0$ for a symmetric target distribution.  Thus, we choose to use kernel \eqref{eqn:kernelchoice} in this work.

Then, the corresponding discrepancy of a design $\xi$ on $[-1,1]^d$ with respect to $F_{\unif}$ and $F_{\asin}$ can be expressed as,
\begin{align}\label{eqn:unifdiscrepancy}
D^2(\xi; F_\unif)
& = \left(\frac{7}{6}\right)^d - \frac{1}{2^{d-1}n}\sum_{i=1}^m n_i \prod_{j=1}^d \left[2+|x_{ij}|-\frac{x_{ij}^2}{2} \right]\nonumber \\
& \qquad + \frac{1}{n^2}\sum_{i,k=1}^m n_in_k\prod_{j=1}^d\left[1+\frac{1}{2}\left(|x_{ij}|+|x_{kj}|-|x_{ij}-x_{kj}| \right) \right],
\end{align}
and
\begin{align}\label{eqn:asindiscrepancy}
D^2(\xi; F_\asin)
& = \left(1+\frac{2}{\pi}-\frac{4}{\pi^2}\right)^d - \frac{2}{n}\sum_{i=1}^m n_i \prod_{j=1}^d \left[1+\frac{1}{\pi}+\frac{1}{2}|x_{ij}|-\frac{1}{\pi}\left(x_{ij}\arcsin(x_{ij})+\sqrt{1-x_{ij}^2} \right)\right]\nonumber \\
& \qquad + \frac{1}{n^2}\sum_{i,k=1}^m n_in_k\prod_{j=1}^d\left[1+\frac{1}{2}\left(|x_{ij}|+|x_{kj}|-|x_{ij}-x_{kj}| \right) \right],
\end{align}
respectively.
The derivation of the discrepancy in \eqref{eqn:unifdiscrepancy} and \eqref{eqn:asindiscrepancy} is provided in the appendix.

\section{Numerical Examples}\label{sec:examples}
We first consider the uniform distribution as the target distribution.
The discrepancy of a design $\xi$, $D(\xi; F_{\unif})$, as a measure of the difference between the empirical distribution of $\xi$ and the uniform distribution $F_{\unif}$,  reflects the space-filling property of $\xi$ to some extent.
Besides the Sobol sequence, there are other popular space-filling designs in the literature.
For example, the Latin hypercube design is a space-filling design with one-dimensional stratification property \citep{mckay2000comparison}.
There are two popular types of Latin hypercube designs.
One is the maximin Latin hypercube design proposed by \cite{morris1995exploratory}, which maximizes the minimum Euclidean distance between any two points in the design.
The other type is the design that minimizes the correlations among experimental factors \citep{iman1982distribution,owen1994controlling,tang1998selecting}.
\cite{joseph2015maximum} proposed a maximum projection design that optimizes projection properties on all subspaces of experimental factors.
In this work, we consider five space-filling designs for comparison: (1) scrambled Sobol design (SSD); (2) maximin Latin hypercube design (MmLHD); (3) correlation minimized Latin hypercube design (mcLHD); (4) maximum projection Latin hypercube design (MPLHD); and (5) random design (Random).
A scrambled Sobol design is a randomly scrambled Sobol sequence discovered by \cite{owen2000monte}, which achieves better equidistribution of nets  compared to a deterministic Sobol sequence \citep{hickernell1996mean}.
The scrambled Sobol,  Latin hypercube and random designs are generated using existing \textsc{Matlab} routines, \emph{sobolset},  \emph{lhsdesign} and \emph{rand}, respectively.
Maximum projection Latin hypercube designs are generated using R package \emph{MaxPro} (v4.1-2; Shan and Joseph, 2018).
Note that the classical Sobol sequence and Latin hypercube designs are usually on $[0,1]^d$,
and here the above five designs are generated using the classical space-filling designs with appropriate scale and shift to match $\Omega = [-1,1]^d$.
In addition to space-filling designs, we also consider the designs to approximate arcsine target distribution $F_{\asin}$.
To construct such designs,  we use the inverse transformed space-filling designs on $[0,1]^d$, i.e., (6) AsinSSD; (7) AsinMmLHD; (8) AsinmcLHD; (9) AsinMPLHD; and (10) AsinRandom. As stated in Section \ref{sec:boundleff},
one can also construct the designs that minimize $D(\xi;F_{\unif})$ or $D(\xi;F_{\asin})$. For the sake of computation efficiency, we choose to use the readily available designs in this work.

In this section, we conduct several numerical examples to examine the discrepancy (under either $F_{\unif}$ or $F_{\asin}$) and the L-efficiency of the above ten designs.
Since the accuracy of the coefficient estimates is usually of interest in pilot experiments,
we choose $\mL$  in L-optimality to be the identity matrix, and it becomes the popular A-optimality.
Note that the number of replications does not affect the L-efficiency of a design,
In the following examples, it is assumed that the pilot experiment consists of $n$ distinct points.
Since all ten types of designs in comparison are random, we generate 100 sets of design points for each of them and compute the average A-efficiency.

\textbf{Example 1.} We consider the crystallography experiment example in \cite{woods2006designs}, which studies how process variables affect the probability that a new product is formed in a crystallography experiment.
The four explanatory variables $-1\leq x_i\leq 1$, $i=1,...,4$ are rate of agitation during maxing, volume of composition, temperature, and evaporation rate, and the binary response $Y\in \{0,1\}$ denotes whether a new product is formed. A logistic regression model
$$Prob(Y=1|\vx) = \frac{e^{\eta(\vx)}}{1+e^{\eta(\vx)}}.$$
with main-effect only, $\eta(\vx) = \beta_0+\sum\limits_{i=1}^4 \beta_ix_i$, is used.
We consider a pilot experiment with $n=2^4 = 16$ distinct design points.
We investigate the performance of the ten designs over three coefficient spaces $\mathcal{B}_1$, $\mathcal{B}_2$, and $\mathcal{B}_3$, the details of which are provided in Table \ref{tab:parameterspace}. Here, $\mathcal{B}_2$ has the same centroid as $\mathcal{B}_1$ but substantially smaller volume, and $\mathcal{B}_3$ has the same volume as $\mathcal{B}_1$ but is centered further from $(0,\cdots,0)$.  These are the model and coefficient spaces used in \cite{woods2006designs}.
To assess the performance of the ten designs, we calculate the A-efficiency of each design over a grid sample of size $N = 7^5 = 16,807$ drawn from each coefficient space.

\begin{table}[htbp]
  \centering
  \caption{Ranges of Regression Coefficients for the Coefficient Space $\mathcal{B}_j$, $j=1,2,3$}
    \begin{tabular}{lccc}
    \hline
     Regression      &   \multicolumn{3}{c}{Coefficient space}  \\
          \cline{2-4}
    coefficient  & $\mathcal{B}_1$  & $\mathcal{B}_2$  & $\mathcal{B}_3$ \\
    \hline
    $\beta_0$  &  $[-3,3]$     &  $[-1,1]$     & $[-3,3]$  \\
    $\beta_1$  &  $[-2,4] $    &  $[0,2]$     &  $[4,10]$ \\
    $\beta_2$   & $[-3,3]$      & $[-1,1]$      &  $[5,11]$ \\
    $\beta_3$   & $[0,6]$      &  $[2,4]$     &  $[-6,0]$ \\
    $\beta_4$   &  $[-2.5, 3.5]$     &    $[-.5,1.5]$   & $[-2.5,3.5]$ \\
    \hline
    \end{tabular}%
  \label{tab:parameterspace}%
\end{table}
The boxplots of A-efficiency and discrepancy of the ten designs are reported in Figure \ref{fig:cryAopt} and \ref{fig:cryDisc}, respectively.  The red asterisks (*) in Figure \ref{fig:cryAopt} represent the worst-case A-efficiency of each design over $N = 16,807$ sampled regression coefficients. In general, the designs with smaller discrepancy, either with respect to uniform distribution or arcsine distribution, yield larger A-efficiency for all three coefficient spaces.    Specifically, it is seen from Figure \ref{fig:cryDisc} that
the scrambled Sobol design (SSD) and the maximum projection Latin hypercube design (MPLHD) obtain smaller uniform discrepancies than the other space-filling designs, and similarly, their arcsine counterparts obtain smaller arcsine discrepancies.  A consistent pattern is observed in Figure \ref{fig:cryAopt} regarding A-efficiency. Compared to the other designs of the same target distribution, SSD, MPLHD, AsinSSD, and AsinMPLHD also achieve larger worst-case A-efficiency.  These numerical results confirm our theoretical understanding that the design with a smaller discrepancy performs better.  Furthermore,  Figure \ref{fig:cryAopt} shows that the uniform low-discrepancy designs, which adopt the uniform distribution $F_{\unif}$ as the target distribution, are superior in the perspective of the worst-case A-efficiency.  That is,  to regulate the worst-case A-efficiency,  the uniform distribution is a more preferable choice for target distribution $F_{\tar}$,  although under some circumstances ($\mathcal{B}_1$ and $\mathcal{B}_2$), choosing arcsine distribution $F_{\asin}$ as the target distribution yields larger the median and maximum A-efficiency.


\begin{figure}[htbp]
\centering
\subfloat[$\mathcal{B}_1$]
{{\includegraphics[width=18cm,height=6.5cm]{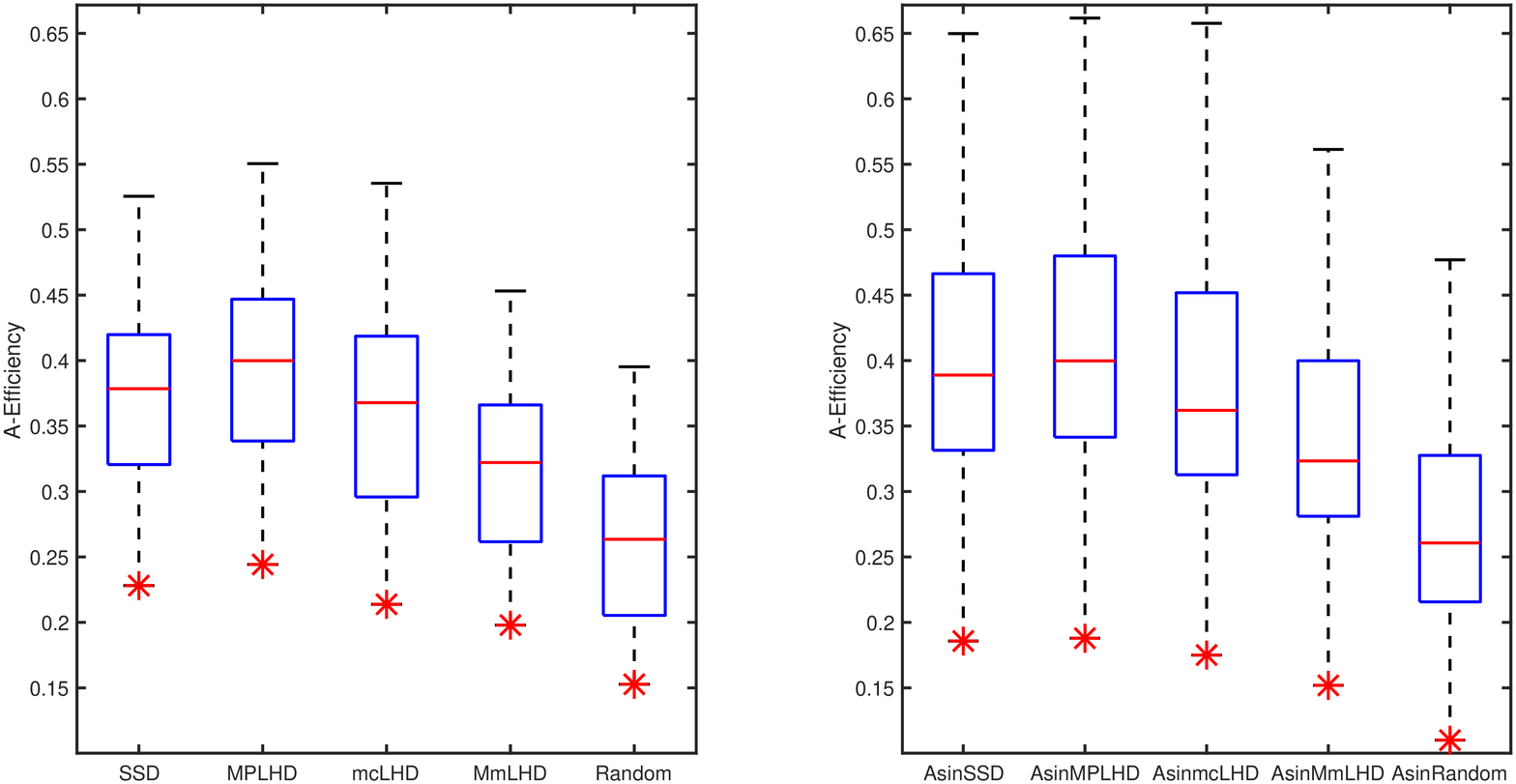}}}
\qquad
\subfloat[$\mathcal{B}_2$]
{{\includegraphics[width=18cm,height=6.5cm]{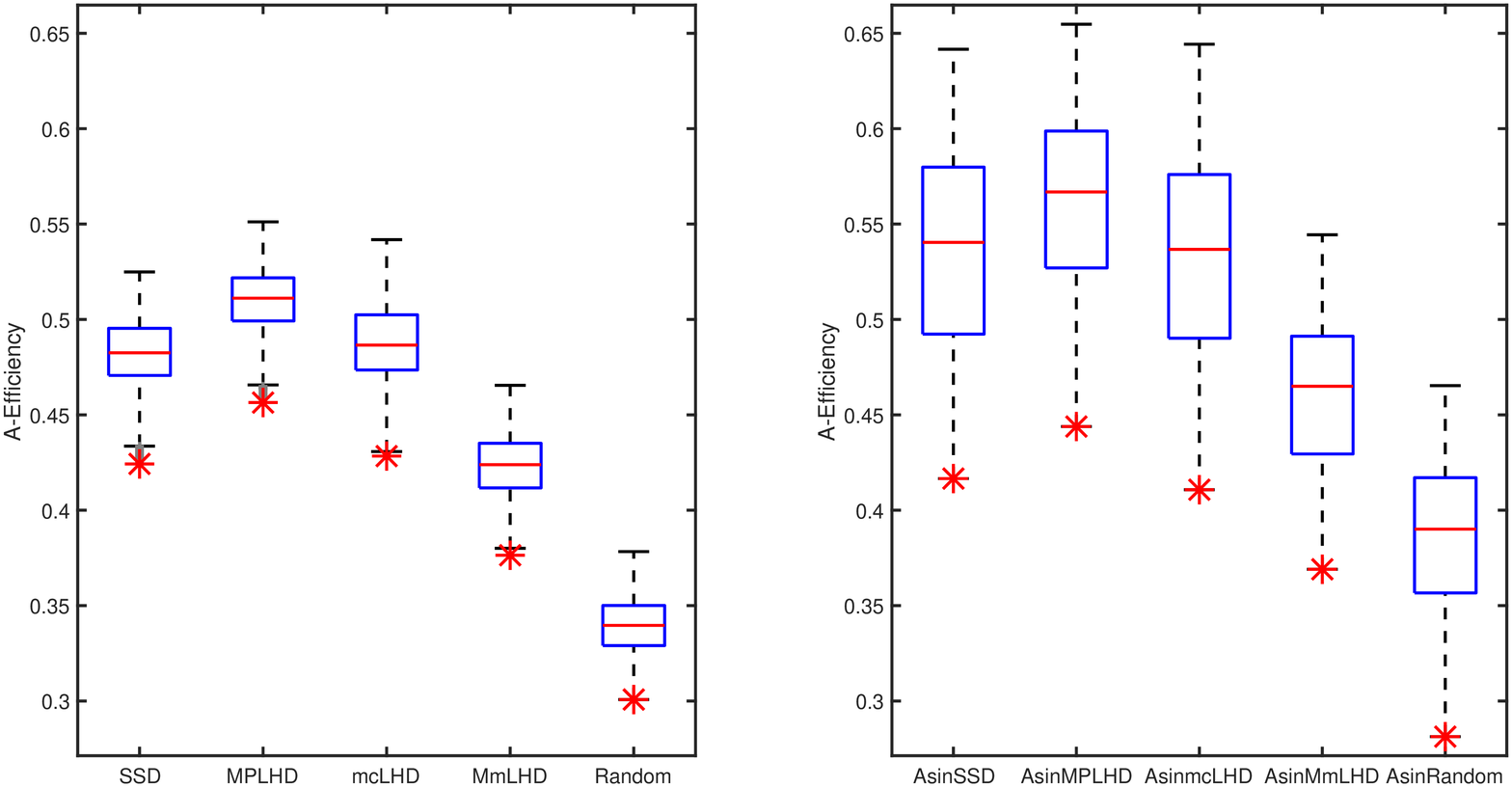}}}
\qquad
\subfloat[$\mathcal{B}_3$]
{{\includegraphics[width=18cm,height=6.5cm]{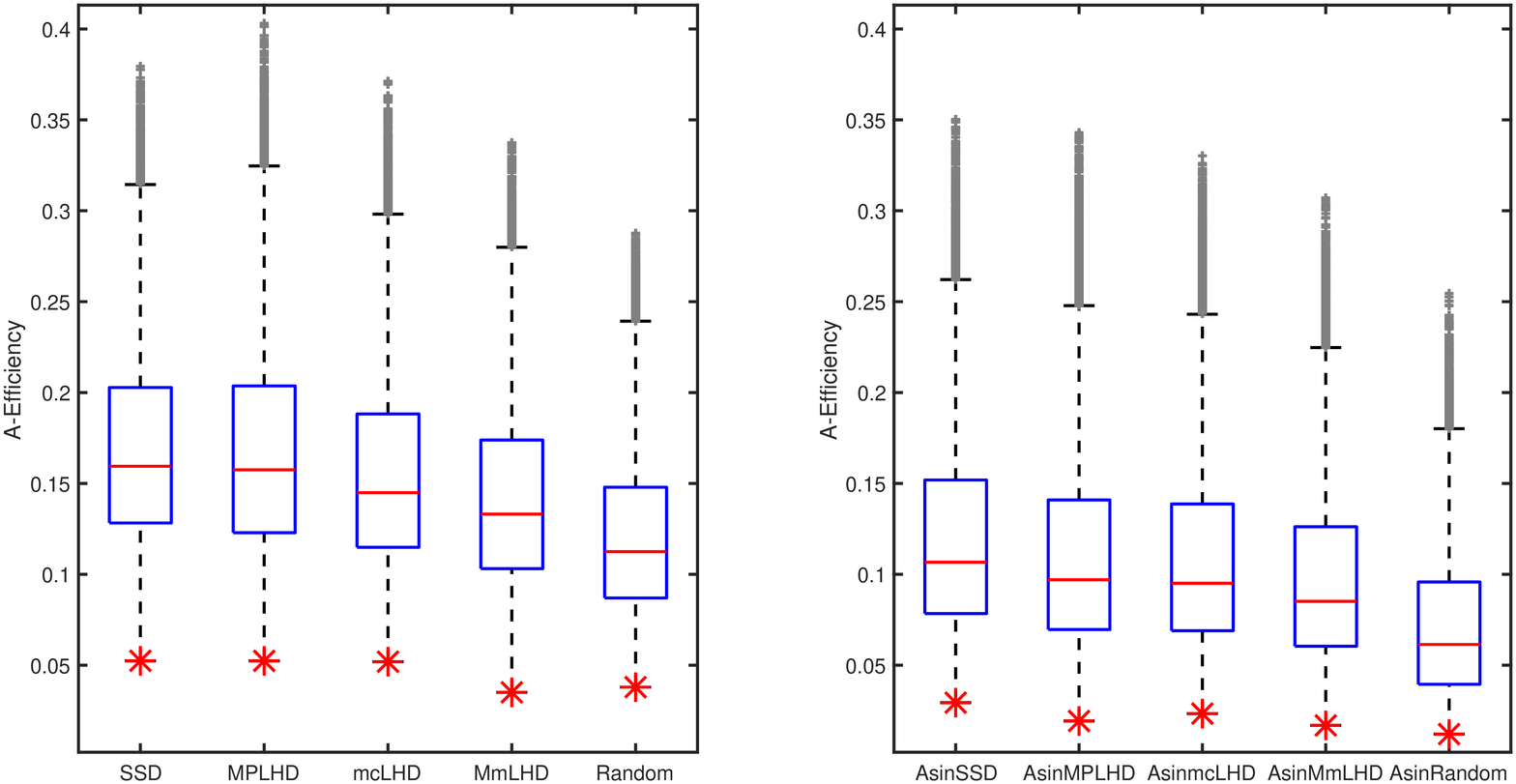}}}
\caption{Boxplot of A-efficiency of ten designs over $N=16,807$ regression coefficients}
\label{fig:cryAopt}
\end{figure}

\begin{figure}[htbp]
\centering
{{\includegraphics[width=18cm,height=6.5cm]{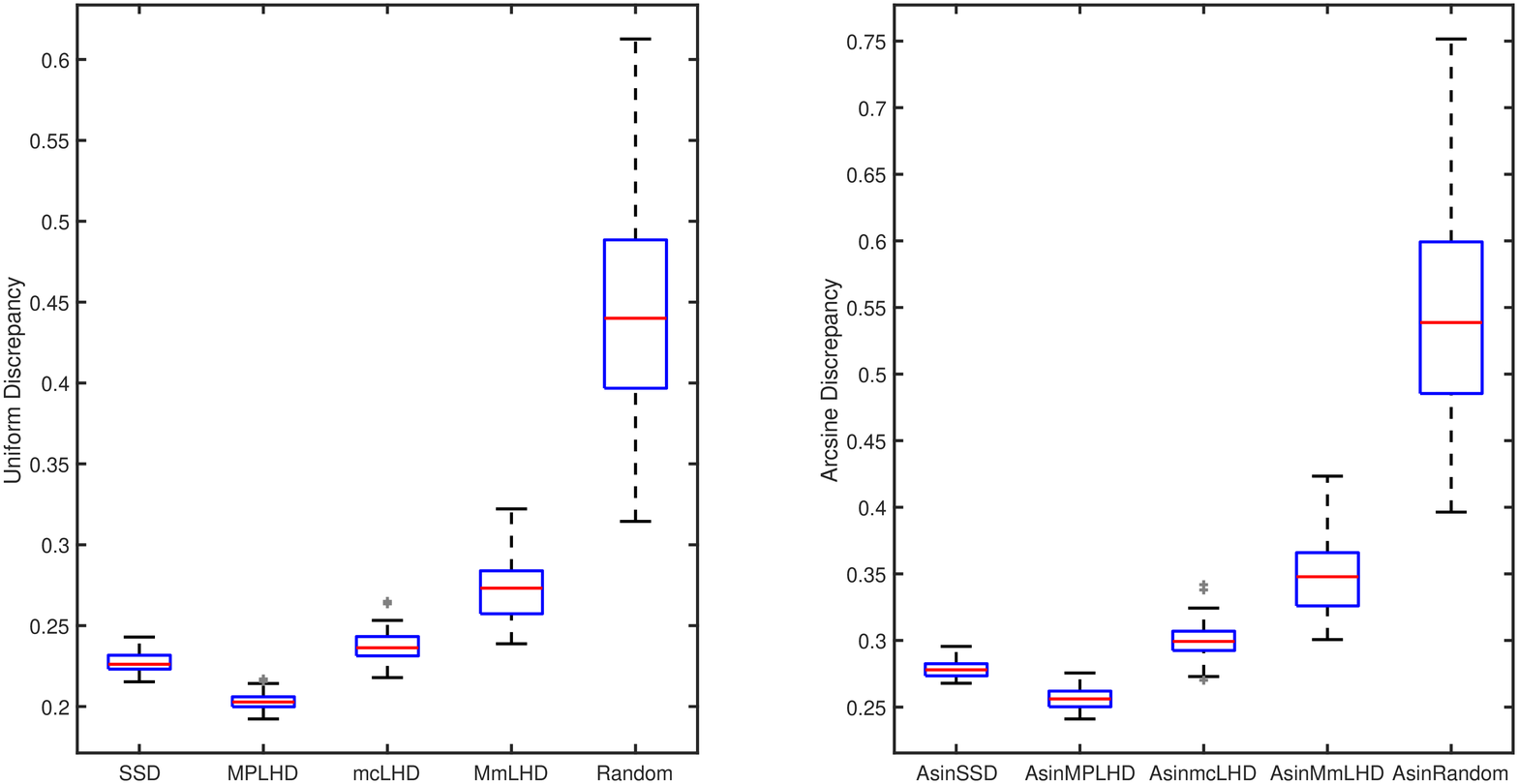}}}
\caption{Boxplot of uniform discrepancy of space-filling designs and arcsine discrepancy of arcsine transformed space-filling designs for $d=4$ and $n=16$}
\label{fig:cryDisc}
\end{figure}



\textbf{Example 2.} In this example, we investigate the performance of the space-filling designs and the arcsine transformed designs with $n=32$ distinct design points for the probit model with $d=6$ experimental factors $\vx = [x_1,\ldots,x_6]$ on $\Omega = [-1,1]^6$. Such probit model with a binary response $Y\in\{0,1\}$ can be expressed as
$$Prob(Y=1|\vx) = \Phi(\eta(\vx)),$$
where $\Phi$ is the cumulative distribution function of standard normal distribution.
Two linear predictors, one with only main effects, and the other with some interactions, are considered:
\begin{align*}
\text{Linear Predictor 1}: & \eta_1(\vx) = \beta_0+\sum\limits_{i=1}^6 \beta_ix_i, \\
\text{Linear Predictor 2}: & \eta_2(\vx)= \beta_0+\sum\limits_{i=1}^6 \beta_ix_i+\theta_1x_1x_2+\theta_2x_2x_3+\theta_3x_4x_6.
\end{align*}

To assess the performance of the ten designs, the range of each regression coefficient is set to be $[-1.2,1.2]$,
from which a Sobol sample of $N=1,024$ values of regression coefficients is generated.
The A-efficiency of the ten designs over $N=1,024$ samples of regression coefficients for linear predictors 1 and 2, and their discrepancies are computed.
The corresponding boxplots of A-efficiency and design discrepancy are reported in Figure \ref{fig:probitAeff} and \ref{fig:discrepancy_d6n32}, respectively.
\begin{figure}[hbtp]
\centering
\subfloat[Linear Predictor 1]
{{\includegraphics[width=18cm,height=6.5cm]{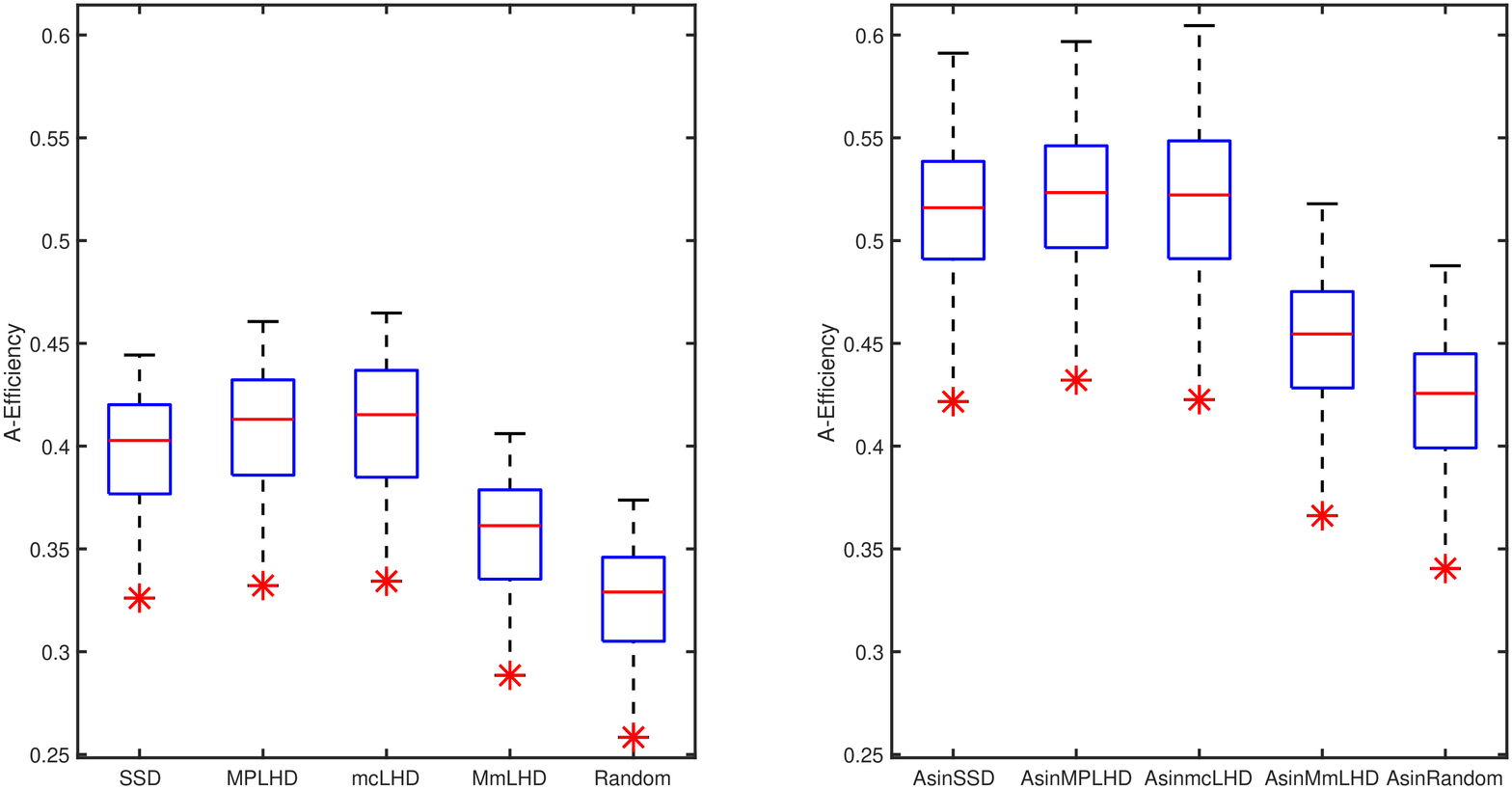}}}
\qquad
\subfloat[Linear Predictor 2]
{{\includegraphics[width=18cm,height=6.5cm]{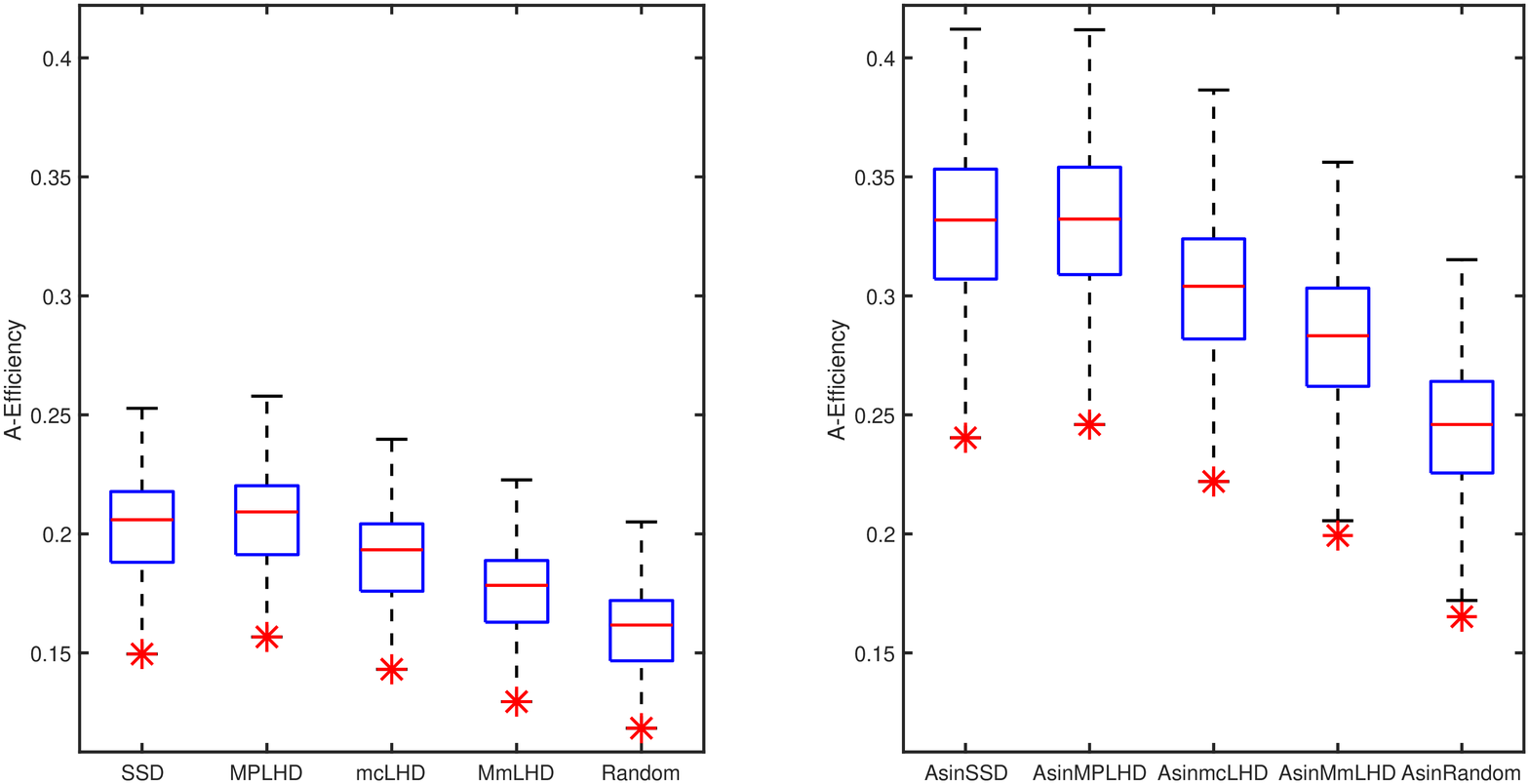}}}
%
\caption{Boxplot of A-efficiency of the ten designs over $N=1024$ regression coefficients}
\label{fig:probitAeff}
\end{figure}

\begin{figure}[hbtp]
\centering
{\includegraphics[width=18cm,height=6.5cm]{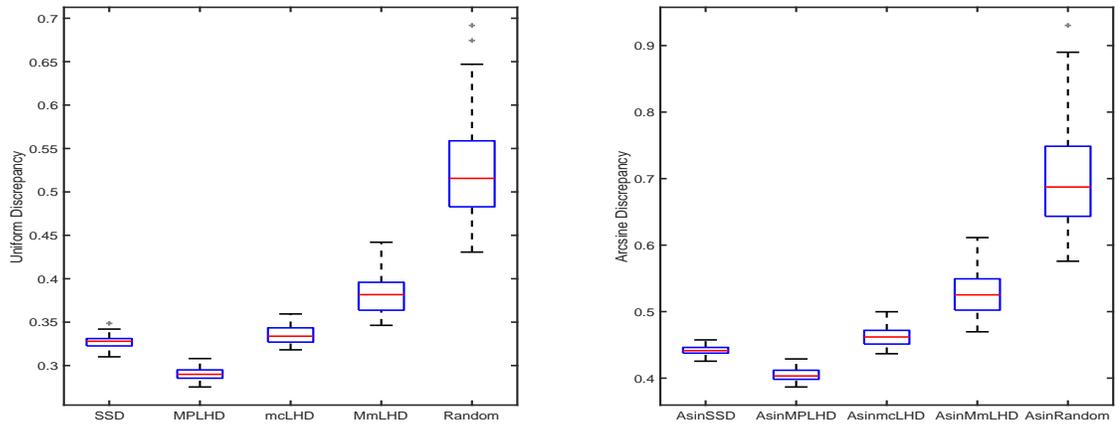}}
\caption{Boxplot of uniform discrepancy of space-filling designs and arcsine discrepancy of arcsine transformed space-filling designs for $d=6$ and $n=32$}
\label{fig:discrepancy_d6n32}
\end{figure}

Figure \ref{fig:probitAeff} and \ref{fig:discrepancy_d6n32} reveal that the performance on the A-efficiency of a design is consistent with that on the discrepancy, which confirms the theoretical result that a design with a smaller discrepancy tends to have a larger worst-case A-efficiency. For the choice of the target distribution, interestingly,  different from the observations in Example 1 of logistic regression, the designs with small arcsine discrepancy obtain larger worst-case A-efficiency for both predictors 1 and 2.  From Figure \ref{fig:probitAeff}-(a), it is seen that AsinSSD, AsinMPLHD, and AsinmcLHD are comparable and better than the other designs in comparison.
When the interactions are involved in the linear predictor 2 of the probit model,
Figure \ref{fig:probitAeff}-(b) shows that AsinSSD and AsinMPLHD give the comparable performance and are better than the other designs in comparison.


\textbf{Example 3.} This example explores the performance of the uniform/arcsine low discrepancy designs under the linear regression model.
We consider the design variable $\vx$ on $[-1,1]^7$ of $d=7$, and the response of a linear regression model is
$$y = \vbeta^{\top}\vg(\vx) + \epsilon,$$
where the noise term $\epsilon\sim N(0,\sigma^2)$, and the noises corresponding to different experimental variable values are assumed to be independent.
Note that the A-efficiency of a design for linear regression model depends only on the basis function $\vg$,
but not the regression coefficient $\vbeta$.
We consider three types of the basis function $\vg$ as follows:
\begin{align*}
\textrm{Main effect only: } &[1, x_1,\ldots,x_7]^{\top}; \\
\textrm{Main effect with one second-order term: } &[1, x_1,\ldots,x_7, x_ix_j]^{\top}; \\
\textrm{Main effect with two interactions: } &[1,x_1,\ldots, x_7, x_ix_j, x_kx_s]^{\top},
\end{align*}
where $i,j,k,s = 1,\ldots, 7,$ and  $i\neq j\neq k\neq s$. There are totally $N = 174$ basis functions.
The A-efficiency of the ten designs of $n= 2^7 = 128$ distinct points over the considered basis functions are computed.
Figure \ref{fig:linearAeff} and \ref{fig:discrepancy_d7n128} report the  A-efficiency and discrepancy performance of the ten designs, respectively. Generally speaking, the performance of A-efficiency and that of the discrepancy is consistent, which again confirms our theoretical understanding that a design with low discrepancy is suitable when little model specification information is available.  From Figure \ref{fig:linearAeff}, it can be seen that the designs approximating arcsine target distribution perform better than the ones approximating uniform target distribution, which echos the known results for 1-$d$ linear regression models that arcsine support designs obtain high A-efficiency \citep{93puk}.  Among the ten designs,  the arcsine inverse transformed scrambled Sobol design (AsinSSD), which has the smallest arcsine discrepancy, provides the best performance regarding the worst-case A-efficiency.

We would like to point out that,  SSD and AsinSSD are very easy to construct and does not require complicated optimization,
while both MPLHD and mcLHD involve nonlinear optimization in the design construction.
As seen from Figure \ref{fig:cryDisc}, \ref{fig:discrepancy_d6n32} and \ref{fig:discrepancy_d7n128}, the Sobol sequence becomes more advantageous as the design size increases.

\begin{figure}[hbtp]
\centering
{\includegraphics[width=18cm,height=6.5cm]{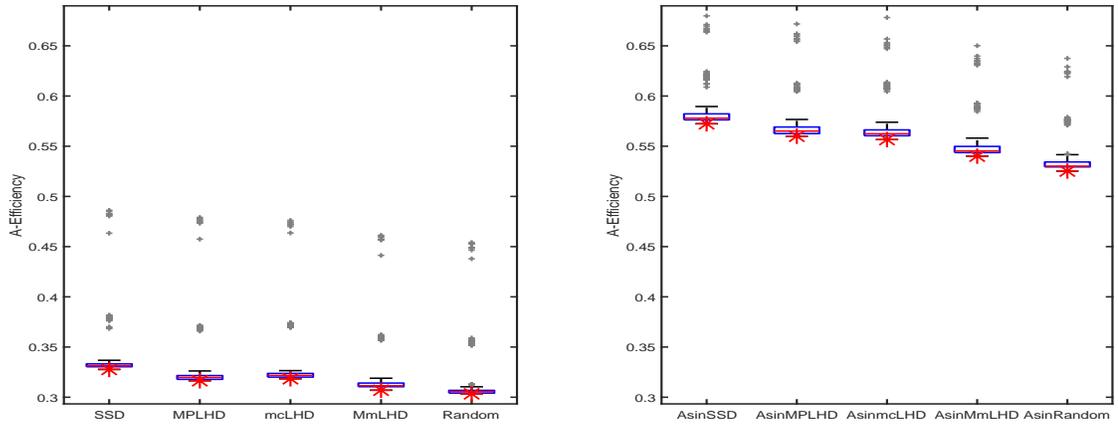}}
\caption{Boxplot of  A-efficiency of the ten designs over $N=174$ basis functions for $d=7$ and $n=128$}
\label{fig:linearAeff}
\end{figure}
\begin{figure}[hbtp]
\centering
{\includegraphics[width=18cm,height=6.5cm]{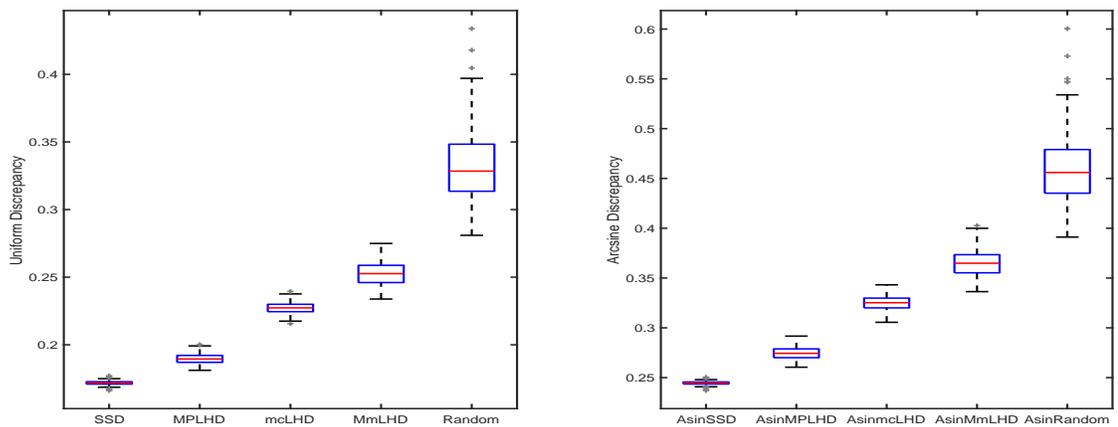}}
\caption{Boxplot of uniform discrepancy of space-filling designs and arcsine discrepancy of arcsine transformed space-filling designs for $d=7$ and $n=128$}
\label{fig:discrepancy_d7n128}
\end{figure}

\section{Discussion}\label{sec:conclusion}
In this work, we investigate the designs for the pilot experiments of GLMs, when little information of the model specification is available.
By deriving a tight lower bound on the L-efficiency of a design for all model specifications in a model space, it is seen that a low discrepancy design would regulate the L-efficiency for a variety of model specifications, and thus is well suited to the purpose of the pilot experiments of GLMs.
Through the numerical comparison of space-filling designs and their arcsine inverse transformed designs, it is observed that a design with a small discrepancy regulates the worst-case A-efficiency.  Among the compared designs,  the maximum projection Latin hypercube design \citep{joseph2015maximum}, which requires some heuristic optimization procedure,  possesses a low uniform discrepancy when the design size is small.  Similarly, the arcsine inverse transformation of a small-sized maximum projection Latin hypercube design usually has a low arcsine discrepancy.
While the scrambled Sobol design, which is easy and fast to construct,  asymptotically achieves low uniform discrepancy. Similarly, its arcsine counterpart usually preserves a small arcsine discrepancy asymptotically.
In the numerical study, it is observed that a design with a small uniform discrepancy obtains a larger worst-case A-efficiency for logistic regression, while an arcsine low-discrepancy design is superior for probit and linear regression models.

Through the pilot experiment, one can obtain some informative understanding of the model specifications,
which can facilitate the next phase of experimental design over a much smaller model space $\mathcal{M}$.
If a single $M = (h,\vg,\vbeta)$ is of interest, one can construct the locally optimal design that optimizes some design criterion, or equivalently, maximizes the corresponding design efficiency.
If multiple model specifications are considered, one can adopt the Bayesian designs \citep{atkinson2015designs} and compromise designs \citep{woods2006designs},
which are globally optimal designs that maximize the mean design efficiency over all potential model specifications with some pre-assumed prior distribution of the potential model specifications.
Alternatively, one can consider the maximin designs \citep{imhof2000graphical, li2020maximin}, which aim at maximizing the minimum design efficiency over all potential model specifications.

There are several directions for future research of the design for pilot experiments.
First, many experiments in engineering and health care encounter both quantitative and qualitative (QQ) responses \citep{deng2015qq, kang2018bayesian}.
It would be interesting to investigate the efficient designs for pilot experiments with QQ response.
Second,  the designs of mixture experiments \citep{shen2020additive} have a constraint design space since the proportions of blends in mixture experiments must sum up to one.
It is not clear how to construct low-discrepancy designs on the constrained space
and whether the constructed designs are still efficient for the pilot study of mixture experiments.
Third, the pilot experiments are also needed in the application of the recommender system, which is of great importance in e-commerce to make customized recommendations for users.
The proposed low-discrepancy design can also be used in such applications, with the challenges that the underlying model for matrix completion contains both linear model and low-rank matrix \citep{mao2019matrix, zeng2021design}.
Finally, it would be interesting to study the connection between the robust design in \cite{dean2015handbook} and the low-discrepancy design in this work.
\cite{hickernell2002uniform} showed that both the variance and the bias of the response prediction are regulated using a low-discrepancy design for linear regression models with misspecification.
A future direction could be to investigate how the low-discrepancy design can regulate bias and variance of the response prediction for generalized linear models when misspecification is considered.

\section*{Acknowledgements}
The  authors  would  like  to  sincerely  thank  the  Associate  Editor  and  reviewers  for  their  insightful  comments.  
Deng's work was partly supported by National Science Foundation CISE Expedition grant CCF-1918770.

\section*{Appendix}
\emph{Derivation of the discrepancy in \eqref{eqn:unifdiscrepancy}.}

\noindent We first consider the case $d=1$. We integrate the kernel once:
\begin{align*}
\int_{-1}^1 K(t,x) \, \dif F_{\unif}(t)=&
 \frac{1}{2}\int_{-1}^1 \left[1+\frac{1}{2}(|t|+|x|-|t-x|)\right] \, \dif t\\
=& \frac{1}{2}\left[2+|x|+\frac{1}{2}-\frac{1}{2}\left(\int_{-1}^x(x-t)\dif t+\int_x^1 (t-x)\dif t\right)\right]\\
=&\frac{1}{2}\left[\frac{5}{2}+|x|-\frac{1}{2}\left(x^2+1\right)\right]\\
=& \frac{1}{2}\left(2+|x|-\frac{1}{2}x^2\right).
\end{align*}
Then we integrate once more:
\begin{align*}
{\int_{-1}^1 \int_{-1}^1 K(t,x) \, \dif F_{\unif}(t) \dif F_{\unif}(x)} &= \int_{-1}^{1}  \frac{1}{4}\left(2+|x|-\frac{1}{2}x^2\right) \, \dif x\\
&= \frac{7}{6}.
\end{align*}

Generalizing this to the $d$-dimensional case yields
\begin{gather*}
\int_{[-1,1]^d\times [-1,1]^d} K(\vx,\vt) \, \dif F_{\unif}(\vx)\dif F_{\unif}(\vt) = \left(\frac{7}{6}\right)^d, \\
\int_{[-1,1]^d}K(\vx,\vx_i) \, \dif F_{\unif}(\vx) = \frac{1}{2^d}\prod\limits_{j=1}^d \left(2+|x_{ij}|-\frac{1}{2}x_{ij}^2\right).
\end{gather*}
Thus, the discrepancy of a design $\xi$ for the uniform distribution on $[-1,1]^d$ is
\begin{align*}
D^2(\xi; F_\unif)
& = \left(\frac{7}{6}\right)^d - \frac{1}{2^{d-1}n}\sum_{i=1}^m n_i \prod_{j=1}^d \left[2+|x_{ij}|-\frac{x_{ij}^2}{2} \right]\nonumber \\
& \qquad + \frac{1}{n^2}\sum_{i,k=1}^m n_in_k\prod_{j=1}^d\left[1+\frac{1}{2}\left(|x_{ij}|+|x_{kj}|-|x_{ij}-x_{kj}| \right) \right],
\end{align*}

\noindent \emph{Derivation of the discrepancy in \eqref{eqn:asindiscrepancy}.}

Following the same procedure as the derivation of $D^2(\xi; F_\unif)$,
$$\int_{-1}^1 K(t,x) \, \dif F_{\asin}(t) = 1+\frac{1}{\pi}+\frac{1}{2}|x|-\frac{1}{\pi}(x\arcsin(x)+\sqrt{1-x^2}),$$
$${\int_{-1}^1 \int_{-1}^1 K(t,x) \, \dif F_{\asin}(t) \dif F_{\asin}(x)} = 1+\frac{2}{\pi}-\frac{4}{\pi^2},$$
and thus, Thus, the discrepancy of a design $\xi$ for the arcsine distribution on $[-1,1]^d$ is
\begin{align*}
D^2(\xi; F_\asin)
& = \left(1+\frac{2}{\pi}-\frac{4}{\pi^2}\right)^d - \frac{2}{n}\sum_{i=1}^m n_i \prod_{j=1}^d \left[1+\frac{1}{\pi}+\frac{1}{2}|x_{ij}|-\frac{1}{\pi}\left(x_{ij}\arcsin(x_{ij})+\sqrt{1-x_{ij}^2} \right)\right]\nonumber \\
& \qquad + \frac{1}{n^2}\sum_{i,k=1}^m n_in_k\prod_{j=1}^d\left[1+\frac{1}{2}\left(|x_{ij}|+|x_{kj}|-|x_{ij}-x_{kj}| \right) \right].
\end{align*}

\bibliographystyle{asa}
\bibliography{GLM}

\end{document}